\documentclass[11pt,a4paper]{article}

\usepackage[utf8]{inputenc}
\usepackage{amsmath}
\usepackage{amsfonts}
\usepackage{amssymb}
\usepackage{amsthm}
\usepackage{algorithm}
\usepackage{algpseudocode}
\usepackage{multicol}

\title{Reachability Problems for Continuous Chemical Reaction Networks\footnote{This research was supported in part by National Science Foundation Grant 1247051. Part of the second author’s work was carried out while participating in the 2015 Focus Semester on Computability and Randomness at Heidelberg University.}}
\date{}
\author{Adam Case, Jack H. Lutz, and D. M. Stull\\
		\small{Department of Computer Science}\\
		\small{Ames, IA 50011 USA}\\
		\small{$\{$adamcase, lutz, dstull$\}$@iastate.edu}}

\newtheorem*{definition}{Definition}
\newtheorem*{ctsReachdefinition}{The Continuous CRN Reachability Problem}
\newtheorem*{subsetReachdefinition}{The Continuous CRN Subset Reachability Problem}

\newtheorem{observation}{Observation}

\newtheorem{lemma}{Lemma}

\newtheorem{maintheorem}{Main Theorem}
\newtheorem{reftheorem}{Theorem}\setcounter{reftheorem}{-1}

\theoremstyle{definition}
\newtheorem{prooflemma1}{Proof of Lemma}

\begin{document}

\maketitle

\begin{abstract}
Chemical reaction networks (CRNs) model the behavior of molecules in a well-mixed system. The emerging field of molecular programming uses CRNs not only as a descriptive tool, but as a programming language for chemical computation. Recently, Chen, Doty and Soloveichik introduced a new model of chemical kinetics, rate-independent continuous CRNs (CCRNs), to study the chemical computation of continuous functions. A fundamental question of a CRN is whether a state of the system is reachable through a sequence of reactions in the network. This is known as the reachability problem. In this paper, we investigate CCRN-REACH, the reachability problem for this model of chemical reaction networks. We show that, for continuous CRNs, constructing a path to a state of the network is computable in polynomial time. We also prove that a related problem, Sub-CCRN-REACH, is NP-complete.

\end{abstract}

\section{Introduction}

Abstract chemical reaction networks (CRNs) model chemical interactions in a well mixed system. Informally, CRNs consist of a finite set of species of chemicals (usually written abstractly as capital letters; i.e., A, B, etc.) and a finite set of reactions between these species. A simple example is the CRN consisting of species $A$, $B$ and $C$, with one reaction $2A + B \xrightarrow{k} 2C$ (taking $A$ to be the hydrogen molecule $H_2$, $B$ to be the oxygen molecule $O_2$, and $C$ to be the water molecule $H_2O$, this CRN models the formation of water molecules with kinetic rate constant $k$). CRNs have historically been used as a descriptive tool, allowing researchers to formally analyze the behavior of natural chemical systems. However, the field of molecular programming has recently brought CRNs to prominence as a programming language for chemical computation. Molecular programming, as the name suggests, is devoted to engineering complex computational systems from molecules. Recent work in this area has come to view abstract CRNs as a programming language to engineer ``chemical software" \cite{JRP, SCWB}. Exciting new developments have shown methods of converting arbitrary chemical reaction networks into computation using DNA strands \cite{Cardelli, CDSPC, SSW}. Thus the programmable power of chemical reaction networks is no longer simply of theoretical interest. To achieve the goal of engineering large scale, robust chemical computation, tools to analyze CRNs will be vital. 

There are many ways to define the behavior of abstract CRNs, the two most prominent being mass action kinetics and stochastic chemical reaction networks. Mass action kinetics was the first model to be studied extensively. It is a continuous, deterministic model of chemical reaction networks. Mass action kinetics is used to study systems with sufficiently large numbers of molecules so that the amount of a given molecule can be represented as a real-valued concentration. The dynamics of reactions under mass action kinetics are governed by deterministic differential equations. However, the deterministic mass action model is not well suited if the number of molecules of the system is low. Stochastic CRNs are widely used to analyze those systems with a relatively low number of molecules \cite{MA, ELSS}. The stochastic CRN model is discrete and non-deterministic. Unlike mass action, the amount of each species is represented as a non-negative integer, and the reactions of a system are modeled as Markov jump processes \cite{gillespie}. The stochastic model is closely related to many well-studied models of computation such as Vector Addition Machines \cite{KM}, Petri Nets \cite{EN} and Population Protocols \cite{AADFP}. Recently, Chen, Doty and Soloveichik introduced a new model of chemical kinetics, rate-independent continuous CRNs (CCRNs) \cite{CDS}. The CCRN model is continuous, dealing with real-valued concentrations of species, but, unlike the stochastic or mass action models, it is rate-free (reactions do not have any associated kinetic rate constant). Chen, Doty and Soloveichik used CCRNs to study which real valued functions $f: \mathbb{R}^k \rightarrow \mathbb{R}$ were computable by a chemical reaction network. By being a rate-free model, it allows for the study of the computational power of large chemical systems relying on stoichiometry alone (i.e., without depending on specific rates of the reactions). This is important, as rate constants are hard to experimentally determine and vary under external factors such as temperature.

A fundamental question one can ask of a stochastic chemical reaction network is whether a particular state is reachable from a starting configuration; this is called the reachability problem. The reachability problem for stochastic CRNs is equivalent to an important problem in theoretical computer science, the Vector Addition System Reachability problem (VAS reachability) \cite{CSWB}. The VAS reachability problem was proven to be at least EXPSPACE-hard by Lipton in 1976 \cite{lipton}, before it was even proven decidable. In 1981, building on the work of Sacerdote and Tenney \cite{ST}, Mayr proved the reachability problem was decidable \cite{mayr}. Subsequently, Kosaraju \cite{kosaraju} and Lambert \cite{lambert} gave two additional proofs of the decidability of VAS Reachability. However, all proofs that the reachability problem is decidable were very difficult, until L\'er\^oux \cite{leroux} gave a greatly simplified proof. Unfortunately, we still do not know if this problem is decidable in any primitive recursive time bound. 

In this paper, we investigate two variants of the reachability problem in the context of CCRNs. In section 3, we analyze the complexity of the direct analog of the reachability problem for CCRNs, the continuous chemical reaction network reachability problem (CCRN-REACH). Informally, the CCRN-REACH problem is: given a CCRN $C$ and states $\mathbf{c}$ and $\mathbf{d}$, output a path taking $\mathbf{c}$ to $\mathbf{d}$, if one exists. To effectively compute CCRN-REACH, we will require the states to be over the rationals instead of over arbitrary reals. We show that, contrary to the difficulty of the VAS reachability problem, CCRN-REACH can be computed in polynomial time. In the process, we give new definitions and lemmas which we believe will be useful in further investigations of the continuous chemical reaction network model. In section 4, we define a problem closely related to the reachability problem, called the Sub-CCRN-REACH problem. Sub-CCRN-REACH asks if a path exists between two states using at most $k$ of the reactions in the network. In contrast to the computational ``ease" of CCRN-REACH, we show that Sub-CCRN-REACH is NP-complete.

\section{Preliminaries}
Throughout the remainder of this paper $\|\cdot\|$ will be the max norm. Before proving the main theorem, we will review preliminary definitions and notations for continuous CRNs.
\subsection{Rate Independent Continuous CRNs}
A \textit{continuous chemical reaction network (CCRN)} is a pair $C = (\Lambda, R)$, where $\Lambda$ is a finite set of \textit{species}, and $R$ is a finite set of \textit{reactions} over $\Lambda$. We typically denote species by capital letters, so that $\Lambda = \{A, B,\ldots\}$. A \textit{reaction} over the set of species $\Lambda$ is an element $\rho = (\mathbf{r}, \mathbf{p}) \in \mathbb{N}^{\Lambda} \times \mathbb{N}^{\Lambda}$, where $\mathbf{r}$ and $\mathbf{p}$ specify the stoichiometry of the reactants and products, respectively. We require the \textit{net change} $\Delta\rho = \mathbf{p} - \mathbf{r}$ of a reaction $\rho = (\mathbf{r}, \mathbf{p})$ to be nonzero. We will usually write a reaction using the ``reactants, right arrow, products" notation; for example, $\rho = A + B \rightarrow C$ (in this example $\mathbf{r} = (1, 1, 0)$ and $\mathbf{p} = (0, 0, 1)$). A reaction $\rho = (\mathbf{r}, \mathbf{p})$ is \textit{catalytic} if, for some species $s$, $\mathbf{r}(s) = \mathbf{p}(s) \neq 0$ (for example, $A + B \rightarrow A + C$). In this case, we call the species $s$ a \textit{catalyst}. Each CCRN $C = (\Lambda, R)$ has an associated \textit{reaction stoichiometry matrix} $\mathbf{M}$ specifying the net change of each species for every reaction. Formally, $\mathbf{M}$ is a $\vert \Lambda \vert$ $\times$ $\vert R \vert$ matrix over $\mathbb{Z}$ such that $\mathbf{M}(i, j)$ is the net change of the $i$th species for the $j$th reaction. Note that $\mathbf{M}$ does not fully specify a CCRN $C$, since it does not identify catalytic reactions. A \textit{state} of a CCRN $C = (\Lambda, R)$ is a vector $\mathbf{c} \in \mathbb{R}_{\geq 0}^{\Lambda}$ specifying the (non-negative) concentration of each species. The \textit{support of a state} $\mathbf{c}$ is the set $supp(\mathbf{c}) = \{s \in \Lambda \, \vert \, \mathbf{c}(s) > 0\}$ of all species with non-zero concentrations at $\mathbf{c}$. The \textit{support of a reaction} $\rho = (\mathbf{r}, \mathbf{p})$ is the set $supp(\rho) = \{s \in \Lambda \, \vert \, \mathbf{r}(s) > 0\}$ of all reactants of $\rho$. A reaction $\rho = (\mathbf{r}, \mathbf{p}) \in R$ is \textit{applicable} at a state $\mathbf{c}$ if $supp(\mathbf{r}) \subseteq supp(\mathbf{c})$ (i.e., if the concentration of each reactant is non-zero at $\mathbf{c}$). A \textit{flux vector} of a CCRN $C = (\Lambda, R)$ is a vector $\mathbf{u} \in \mathbb{R}_{\geq 0}^{R}$. The \textit{support of a flux vector} $\mathbf{u}$ is the set $supp(\mathbf{u}) = \{\rho \in R \,|\, \mathbf{u}(\rho) > 0\}$. A flux vector $\mathbf{u}$ is \textit{applicable at a state $\mathbf{c}$} if the following conditions hold:
\begin{enumerate}
\item Every $\rho \in supp(\mathbf{u})$ is applicable at $\mathbf{c}$.
\item $\mathbf{c}(s) + \sum\limits_{\rho \in R} \mathbf{u}(\rho) \Delta\rho(s) \geq 0$ for every $s \in \Lambda$.
\end{enumerate}
If a flux vector $\mathbf{u}$ is applicable at state $\mathbf{c}$, we can \textit{apply $\mathbf{u}$ to $\mathbf{c}$}, resulting in the state
\begin{center}
$\mathbf{c} * \mathbf{u} = \mathbf{c} + \sum\limits_{\rho \in R} \mathbf{u}(\rho) \Delta\rho$.
\end{center}
Equivalently, $\mathbf{c} * \mathbf{u} = \mathbf{c} + \mathbf{M}\mathbf{u}$. A \textit{flux vector sequence}, $\mathbf{U} = (\mathbf{u}_1,...,\mathbf{u}_k)$ is a tuple of flux vectors. We apply a flux vector sequence $\mathbf{U} = (\mathbf{u}_1,...,\mathbf{u}_k)$ iteratively to a state $\mathbf{c}$,
\begin{equation*}
\mathbf{c} * \mathbf{U} = (\mathbf{c} * (\mathbf{u}_1,\ldots,\mathbf{u}_{k-1})) * \mathbf{u}_k.
\end{equation*}  
A flux vector sequence $\mathbf{U} = (\mathbf{u}_1,...,\mathbf{u}_k)$ is \textit{applicable} at state $\mathbf{c}$ if $\mathbf{u}_i$ is applicable at $(\mathbf{c} * (\mathbf{u}_1,\ldots,\mathbf{u}_{i-1}))$ for every $1 < i \leq k$. If $\mathbf{c}$ and $\mathbf{d}$ are any states, we say that \textit{$\mathbf{d}$ is reachable from $\mathbf{c}$}, denoted $\mathbf{c} \rightarrow^* \mathbf{d}$, if there exists a flux vector sequence $\mathbf{U}$ applicable at $\mathbf{c}$ such that $\mathbf{c} * \mathbf{U} = \mathbf{d}$. We say that $\mathbf{d}$ is reachable from $\mathbf{c}$ \textit{in k steps}, denoted $\mathbf{c} \rightarrow^{k} \mathbf{d}$, if there exists a flux vector sequence $\mathbf{U} = (u_1,\ldots, u_k)$ applicable at $\mathbf{c}$ such that $\mathbf{c} * \mathbf{U} = \mathbf{d}$. A reaction $\rho \in R$ is \textit{eventually applicable from $\mathbf{c}$} if there exists a state $\mathbf{d}$ reachable from $\mathbf{c}$ so that $\rho$ is applicable at $\mathbf{d}$. A reaction is \textit{permanently inapplicable from $\mathbf{c}$} if it is not eventually applicable from $\mathbf{c}$. 

The following theorem, proven in \cite{CDS}, will be used in the proof of our first main theorem.
\begin{reftheorem}
If $\mathbf{c} \rightarrow^* \mathbf{d}$, then $\mathbf{c} \rightarrow^{m + 1} \mathbf{d}$ where $m = \vert R \vert$ is the number of reactions.
\end{reftheorem}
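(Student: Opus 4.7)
The plan is to compress the reaching flux vector sequence into at most $m+1$ flux vectors while preserving the total flux. Given $(\mathbf{u}_1,\ldots,\mathbf{u}_k)$ with $\mathbf{c}*(\mathbf{u}_1,\ldots,\mathbf{u}_k) = \mathbf{d}$, form $\mathbf{u}^* = \sum_{i=1}^k \mathbf{u}_i$, so that $\mathbf{c} + \mathbf{M}\mathbf{u}^* = \mathbf{d}$. It then suffices to split $\mathbf{u}^*$ into at most $m+1$ non-negative pieces whose partial sums, applied in order, form an applicable sequence at $\mathbf{c}$.

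The first step is to build a layering of the reactions in $supp(\mathbf{u}^*)$ by iteratively closing under reachable support. Let $S_0 = supp(\mathbf{c})$, and for each $i \geq 0$ set
\begin{align*}
R_i &= \{\rho \in supp(\mathbf{u}^*) : supp(\rho) \subseteq S_i\}, \\
S_{i+1} &= S_i \cup \{s \in \Lambda : \Delta\rho(s) > 0 \text{ for some } \rho \in R_i\}.
\end{align*}
Because $R_0 \subseteq R_1 \subseteq \cdots$ is a monotone chain living inside $supp(\mathbf{u}^*)$, it stabilizes at some index $\ell \leq |supp(\mathbf{u}^*)| \leq m$. A key lemma I would prove is that $R_\ell = supp(\mathbf{u}^*)$: by induction on the first time a species becomes positive in the original sequence, the reactants of any reaction used there must lie in $S_\ell$.

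Next I would define the compressed sequence by $\mathbf{v}_i = \delta^i \mathbf{u}^*|_{R_{i-1}}$ for $1 \leq i \leq \ell$ and $\mathbf{v}_{\ell+1} = \mathbf{u}^* - \sum_{i=1}^{\ell}\mathbf{v}_i$, where $\delta \in (0,1)$ is a small parameter to be chosen. By construction $\sum_i \mathbf{v}_i = \mathbf{u}^*$, so if the sequence is applicable at $\mathbf{c}$, it ends at $\mathbf{d}$; non-negativity of $\mathbf{v}_{\ell+1}$ holds once $\sum_{i=1}^{\ell}\delta^i < 1$, which is automatic for $\delta \leq 1/2$.

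The principal obstacle will be verifying applicability of each $\mathbf{v}_i$ at the intermediate state $\mathbf{c}_{i-1} := \mathbf{c} + \mathbf{M}(\mathbf{v}_1 + \cdots + \mathbf{v}_{i-1})$. The crux is a positivity analysis across layers: whenever $s \in S_j \setminus S_{j-1}$, no reaction in $R_{j-1}$ can consume $s$ (because $supp(\rho) \subseteq S_{j-1}$ for such $\rho$), so $s$ is first produced at step $j$ in amount $\delta^j A_s$ for some constant $A_s > 0$ that depends only on $\mathbf{u}^*$ and the stoichiometry, while its later consumption in steps $j+1,\ldots,\ell$ has magnitude $O(\delta^{j+1})$. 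Choosing $\delta$ small enough relative to these finitely many constants forces every species in $S_i$ to remain strictly positive at $\mathbf{c}_i$, so every reaction in $R_i$ stays applicable. The final step is applicable because $supp(\mathbf{v}_{\ell+1}) \subseteq R_\ell$ has all reactants in $S_\ell \subseteq supp(\mathbf{c}_\ell)$, and the output $\mathbf{c}_\ell + \mathbf{M}\mathbf{v}_{\ell+1} = \mathbf{d}$ is non-negative by the hypothesis that $\mathbf{d}$ is a state.
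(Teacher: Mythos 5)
Your argument is correct, but note that the paper does not actually prove Theorem 0 --- it is imported from Chen, Doty, and Soloveichik \cite{CDS} without proof --- so the fair comparison is with the machinery the paper builds in Section 3, which is strikingly close in spirit to what you do. The paper's $\epsilon$-max support flux vector sequence likewise grows the support monotonically in at most $\vert R\vert$ rounds of tiny flux on all currently applicable reactions and then finishes with one balancing vector $\mathbf{v} = S - \mathbf{U}_{\mathbf{c},\epsilon}$; your layering $S_0 \subseteq S_1 \subseteq \cdots$ with geometrically decreasing scales $\delta^i$ is a re-derivation of the same idea. The genuine differences are instructive: you restrict each layer to $supp(\mathbf{u}^*)$ rather than to all applicable reactions, which is exactly what lets the final piece $\mathbf{v}_{\ell+1} = \mathbf{u}^* - \sum_i \mathbf{v}_i$ stay non-negative without the pre-processing loop the paper's Algorithm 1 needs (there, strict positivity of the finishing vector is arranged by averaging solutions $F_\rho$ that each give positive flux to one surviving reaction), and your key lemma $R_\ell = supp(\mathbf{u}^*)$ --- proved by induction along the original run --- is the piece that replaces the appeal to Lemma 2. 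Your positivity bookkeeping is sound: a species entering at layer $j$ gains $\delta^j A_s$ with $A_s > 0$ because no reaction in $R_{j-1}$ can consume a species outside $S_{j-1}$ (a consumer must list it as a reactant), and later consumption is $O(\delta^{j+1})$, so a single sufficiently small $\delta$ works for the finitely many constraints. The only cosmetic gaps are the trivial edge cases ($\mathbf{u}^* = \mathbf{0}$, and padding with zero flux vectors to reach length exactly $m+1$), neither of which threatens the argument.
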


\section{The Reachability Problem for Continuous CRNs}
Having defined the relevant concepts for continuous chemical reaction networks, we are now able to formally define our problem CCRN-REACH.
\begin{ctsReachdefinition}
Given a continuous CRN $C = (\Lambda, R)$ and two states $\mathbf{c}$, $\mathbf{d} \in \mathbb{Q}^{\Lambda}$, output a flux vector sequence $\mathbf{U}$ such that $\mathbf{U}$ is applicable at $\mathbf{c}$ and $\mathbf{c} * \mathbf{U} = \mathbf{d}$, if one exists; output ``not reachable" otherwise.
\end{ctsReachdefinition}

Note that this problem becomes a trivial solution of a system of linear equations if we drop the requirement that the the flux vector sequence must be applicable at $\mathbf{c}$. We will prove that CCRN-REACH is computable in polynomial time. Intuitively, the dramatic difference in the computational difficulty between the VAS reachability problem (known to be at least EXPSPACE-hard) and CCRN-REACH is the additional flexibility given by the rational valued flux vectors. To compute CCRN-REACH, we show how to build a flux vector sequence lead from the starting state to a state of maximal support. This is only possible in the CCRN model of chemical reaction networks, which allows arbitrarily small additions via flux vectors. Once we are in such a maximal state we are able to get to the end state with the application of a single flux vector. To formalize this intuition, we will introduce several definitions and lemmas. 

Fix a continuous CRN $C = (\Lambda$, $R)$.  

\begin{definition}
Let $\mathbf{c}$ be a state, and $\epsilon > 0$. We say that a vector $\mathbf{u}$ is an \textbf{$\epsilon$-max support flux vector of $\mathbf{c}$} if $\mathbf{u}$ satisfies the following:
\begin{enumerate}
\item $\mathbf{u}$ is a flux vector that is applicable at $\mathbf{c}$.
\item for every flux vector $\mathbf{v}$ applicable at $\mathbf{c}$, $supp(\mathbf{c} * \mathbf{v}) \subseteq supp(\mathbf{c} * \mathbf{u})$.
\item $\| \mathbf{u} \| \leq \epsilon$.
\end{enumerate}
\end{definition}
That is, a vector is an $\epsilon$-max support flux vector of a state $\mathbf{c}$ if it is applicable at $\mathbf{c}$ and maximally increases the support of $\mathbf{c}$ while giving at most $\epsilon$ flux to each reaction. We will show that $\epsilon$-max support flux vectors exist for every state and $\epsilon > 0$. 

Let $\epsilon > 0$. We now construct a specific $\epsilon$-max support flux vector of $\mathbf{c}$, which we will henceforth call \textbf{the} $\epsilon$-max support flux vector of $\mathbf{c}$. Define $App_{\mathbf{c}}$ to be the set of all applicable reactions at $\mathbf{c}$. Let $\epsilon_{\mathbf{c}} = min\{\mathbf{c}(s)\, \vert \, s \in supp(\mathbf{c})\}$ (the lowest nonzero concentration of any species at state $\mathbf{c}$), $\Gamma_{\mathbf{c}} = max\{1$, $\vert \Delta\rho(s) \vert : \rho \in App_{\mathbf{c}}\}$, and $\delta_{\mathbf{c}, \epsilon} = \frac{1}{\Gamma_c \vert R \vert} min\{\frac{\epsilon_\mathbf{c}}{2},  \epsilon\}$. 

\begin{definition}
\textbf{The $\epsilon$-max support flux vector of $\mathbf{c}$} is the vector $\mathbf{u}_{\mathbf{c}, \epsilon}$ defined by

\begin{equation*}
  \mathbf{u}_{\mathbf{c}, \epsilon}(\rho)=\begin{cases}
    \delta_{\mathbf{c}, \epsilon}, & \text{if $\rho \in App_{\mathbf{c}}$}\\
    0, & \text{otherwise}
  \end{cases}
\end{equation*}
for every $\rho \in R$. 
\end{definition}

The following lemma shows that $\mathbf{u}_{\mathbf{c}, \epsilon}$ is a well defined $\epsilon$-max support flux vector of $\mathbf{c}$.

\begin{lemma}
Let $\mathbf{c}$ be a state, and $\epsilon > 0$. Then $\mathbf{u}_{\mathbf{c}, \epsilon}$ is an $\epsilon$-max support flux vector of $\mathbf{c}$.
\end{lemma}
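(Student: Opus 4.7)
The plan is to verify the three defining conditions of an $\epsilon$-max support flux vector in turn, with most of the work going into condition 2 (maximal support). Conditions 1 and 3 should fall out from a single size estimate on the coordinates of $\mathbf{u}_{\mathbf{c}, \epsilon}$.

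For condition 3, note that every coordinate of $\mathbf{u}_{\mathbf{c}, \epsilon}$ equals either $0$ or $\delta_{\mathbf{c}, \epsilon}$, and by definition $\delta_{\mathbf{c}, \epsilon} \leq \epsilon / (\Gamma_{\mathbf{c}} |R|) \leq \epsilon$, so $\|\mathbf{u}_{\mathbf{c}, \epsilon}\| \leq \epsilon$. For condition 1, the support of $\mathbf{u}_{\mathbf{c}, \epsilon}$ is contained in $App_{\mathbf{c}}$, which gives the first applicability requirement for free. For the non-negativity requirement, I will establish the uniform bound $\bigl|\sum_{\rho} \mathbf{u}_{\mathbf{c},\epsilon}(\rho)\Delta\rho(s)\bigr| \leq |R|\cdot \delta_{\mathbf{c},\epsilon}\cdot \Gamma_{\mathbf{c}} \leq \epsilon_{\mathbf{c}}/2$, and then split into two cases: if $s \in supp(\mathbf{c})$, then $\mathbf{c}(s)\geq \epsilon_{\mathbf{c}}$, so the perturbation keeps $(\mathbf{c}*\mathbf{u}_{\mathbf{c},\epsilon})(s) \geq \epsilon_{\mathbf{c}}/2 > 0$; if $s \notin supp(\mathbf{c})$, then any reaction consuming $s$ would require $s$ as a reactant and hence be inapplicable at $\mathbf{c}$, so every $\rho \in App_{\mathbf{c}}$ has $\Delta\rho(s) \geq 0$, making the sum non-negative.

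Condition 2 is the main obstacle, and the argument above already does most of the work. Fix any flux vector $\mathbf{v}$ applicable at $\mathbf{c}$ and any $s \in supp(\mathbf{c}*\mathbf{v})$; I want to show $s \in supp(\mathbf{c}*\mathbf{u}_{\mathbf{c},\epsilon})$. If $s \in supp(\mathbf{c})$, the case-one estimate from the previous paragraph gives $(\mathbf{c}*\mathbf{u}_{\mathbf{c},\epsilon})(s) \geq \epsilon_{\mathbf{c}}/2 > 0$. Otherwise $\mathbf{c}(s)=0$, so $(\mathbf{c}*\mathbf{v})(s) > 0$ forces $\sum_{\rho} \mathbf{v}(\rho)\Delta\rho(s) > 0$; since all $\mathbf{v}(\rho) \geq 0$, there exists some $\rho_0 \in supp(\mathbf{v}) \subseteq App_{\mathbf{c}}$ with $\Delta\rho_0(s) > 0$. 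Because $s \notin supp(\mathbf{c})$, every other $\rho \in App_{\mathbf{c}}$ also has $\Delta\rho(s) \geq 0$ (no applicable reaction can have $s$ as a reactant). Therefore
\[
(\mathbf{c}*\mathbf{u}_{\mathbf{c},\epsilon})(s) \;=\; \delta_{\mathbf{c},\epsilon}\sum_{\rho \in App_{\mathbf{c}}}\Delta\rho(s) \;\geq\; \delta_{\mathbf{c},\epsilon}\,\Delta\rho_0(s) \;>\; 0,
\]
giving $s \in supp(\mathbf{c}*\mathbf{u}_{\mathbf{c},\epsilon})$ and completing condition 2.

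The conceptual crux is the second case of condition 2: one must observe that applicability at $\mathbf{c}$ forces every reaction that could possibly introduce a new species to have the sign of its contribution pinned to be non-negative, so that firing \emph{every} applicable reaction simultaneously at a tiny uniform rate cannot cancel out any species $\mathbf{v}$ could produce. The role of $\delta_{\mathbf{c},\epsilon}$ being scaled by $\epsilon_{\mathbf{c}}/(2\Gamma_{\mathbf{c}}|R|)$ is purely to keep the perturbation too small to drive any currently-positive species to zero, which is exactly what condition 1 requires.
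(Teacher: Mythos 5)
Your proof is correct and follows essentially the same route as the paper's: the same bound $|R|\,\delta_{\mathbf{c},\epsilon}\,\Gamma_{\mathbf{c}} \leq \epsilon_{\mathbf{c}}/2$ handles non-negativity for species in $supp(\mathbf{c})$, and the same sign observation (no reaction applicable at $\mathbf{c}$ can consume a species absent from $\mathbf{c}$) handles both the remaining non-negativity case and the maximal-support case. No gaps.
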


When the context is clear, we will refer to $\mathbf{u}_{\mathbf{c}, \epsilon}$ as the max support flux vector. The following observation can be easily seen from the definition of the max support flux vector.
\begin{observation}
The $\epsilon$-max support flux vector of $\mathbf{c}$,  $\mathbf{u}_{\mathbf{c}, \epsilon}$, is computable in polynomial time in terms of $(C, \mathbf{c}, \epsilon)$.
\end{observation}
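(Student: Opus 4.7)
The plan is to show that computing $\mathbf{u}_{\mathbf{c},\epsilon}$ reduces to a constant number of elementary rational-arithmetic and set-membership operations on the input data, each of which runs in polynomial time and produces rationals of polynomial bit-length. I would organize the algorithm directly around the defining formula, computing in sequence the auxiliary quantities $App_{\mathbf{c}}$, $\epsilon_{\mathbf{c}}$, $\Gamma_{\mathbf{c}}$, and $\delta_{\mathbf{c},\epsilon}$, and then filling in the coordinates of $\mathbf{u}_{\mathbf{c},\epsilon}$.

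First I would compute $supp(\mathbf{c})$ by scanning the $|\Lambda|$ coordinates of $\mathbf{c}$ and testing each rational entry against $0$; the same scan yields $\epsilon_{\mathbf{c}} = \min\{\mathbf{c}(s) \mid s\in supp(\mathbf{c})\}$ via $|\Lambda|-1$ rational comparisons. Next, I would compute $App_{\mathbf{c}}$ by iterating over the $|R|$ reactions and, for each $\rho=(\mathbf{r},\mathbf{p})$, checking whether $supp(\mathbf{r})\subseteq supp(\mathbf{c})$; this is $O(|R|\,|\Lambda|)$ set-membership tests. I would then compute $\Gamma_{\mathbf{c}}$ by scanning the $|\Lambda|\times |App_{\mathbf{c}}|$ relevant entries of the stoichiometry matrix $\mathbf{M}$ (all of which are integers already present in the input description of $C$), taking absolute values and then a max with $1$. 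Finally, $\delta_{\mathbf{c},\epsilon}$ is obtained from $\epsilon_{\mathbf{c}}$, $\Gamma_{\mathbf{c}}$, $|R|$, and $\epsilon$ by one comparison, one halving, and one division; $\mathbf{u}_{\mathbf{c},\epsilon}$ is then written out in $O(|R|)$ time by assigning $\delta_{\mathbf{c},\epsilon}$ to coordinates in $App_{\mathbf{c}}$ and $0$ elsewhere.

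The one point that needs care, and which I would regard as the main thing to verify rather than a serious obstacle, is the bit-complexity of the rationals produced along the way. Since $\Gamma_{\mathbf{c}}$ and $|R|$ are positive integers bounded by the size of the input, and $\epsilon_{\mathbf{c}}$, $\epsilon$ are rationals from the input, the rational $\delta_{\mathbf{c},\epsilon} = \tfrac{1}{\Gamma_{\mathbf{c}}|R|}\min\{\tfrac{\epsilon_{\mathbf{c}}}{2},\epsilon\}$ has numerator and denominator of bit-length linear in the bit-length of the input. Thus every intermediate number, and the final output vector, has size polynomial in $|(C,\mathbf{c},\epsilon)|$, and each arithmetic step on these rationals runs in polynomial time in standard rational-arithmetic models. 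Combining the polynomial time bounds of the five stages gives the claimed overall polynomial running time.
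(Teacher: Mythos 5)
Your proposal is correct and matches the paper's (implicit) argument: the paper simply asserts that the observation ``can be easily seen from the definition of the max support flux vector,'' and your step-by-step computation of $App_{\mathbf{c}}$, $\epsilon_{\mathbf{c}}$, $\Gamma_{\mathbf{c}}$, and $\delta_{\mathbf{c},\epsilon}$, together with the bit-length check, is exactly the routine verification being left to the reader.
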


Since $\mathbf{u}_{\mathbf{c}, \epsilon}$ is a flux vector applicable at $\mathbf{c}$, we are able to discuss the max support flux vector of the state $(\mathbf{c} * \mathbf{u}_{\mathbf{c}, \epsilon})$. For convenience, we will use the following notation:
\begin{enumerate}
\item $\mathbf{u}_{\mathbf{c}, \epsilon}^1 := \mathbf{u}_{\mathbf{c}, \epsilon}$.
\item $\mathbf{u}_{\mathbf{c}, \epsilon}^k := $ the $\epsilon$-max support flux vector of the state $(\mathbf{c} * (\mathbf{u}_{\mathbf{c}, \epsilon}^1,...,\mathbf{u}_{\mathbf{c}, \epsilon}^{k-1}))$.
\end{enumerate} 
It is important to note that the vectors $\mathbf{u}^i_{\mathbf{c}, \epsilon}$ are distinct; in fact, the hope is that the set of applicable reactions grows with successive applications max support flux vectors.

\begin{definition}
Let $\epsilon > 0$, $m = \vert R \vert + 1$ and $\gamma = \frac{\epsilon}{m}$. The \textbf{$\epsilon$-max support flux vector sequence of $\mathbf{c}$}, denoted $\mathbf{U}_{\mathbf{c}, \epsilon}$, is defined to be the sequence
\begin{center}
$\mathbf{U}_{\mathbf{c}, \epsilon} = (\mathbf{u}_{\mathbf{c}, \gamma}^1,\ldots, \mathbf{u}_{\mathbf{c}, \gamma}^{m})$.
\end{center}
\end{definition}

From Observation 1 it is clear that $\mathbf{U}_{\mathbf{c}, \epsilon}$ is computable in polynomial time in terms of $(C, \mathbf{c}, \epsilon)$. 

\begin{observation}
For any state $\mathbf{c}$ and any $\epsilon > 0$, the $\epsilon$-max support flux vector sequence of $\mathbf{c}$ is a flux vector sequence that is applicable at $\mathbf{c}$. Moreover, $\|\sum\limits_{i = 1}^m\mathbf{u}_{\mathbf{c}, \gamma}^i\| \leq \epsilon$.
\end{observation}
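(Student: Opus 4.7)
The plan is to verify the two claims in Observation 2 separately, relying directly on the definition of $\mathbf{U}_{\mathbf{c},\epsilon}$ together with Lemma 1 (which guarantees that the recursively defined vectors $\mathbf{u}_{\mathbf{c},\gamma}^k$ are genuine $\gamma$-max support flux vectors at the appropriate intermediate states). Recall $\gamma = \epsilon/m$ and $m = |R|+1$.

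For applicability, I would proceed by induction on $i$. The base case $i=1$ is immediate: by Lemma 1 applied to $\mathbf{c}$ with tolerance $\gamma$, $\mathbf{u}_{\mathbf{c},\gamma}^1$ is a flux vector applicable at $\mathbf{c}$. For the inductive step, suppose the prefix $(\mathbf{u}_{\mathbf{c},\gamma}^1,\ldots,\mathbf{u}_{\mathbf{c},\gamma}^{i-1})$ is applicable at $\mathbf{c}$, so that the intermediate state $\mathbf{c}_{i-1} := \mathbf{c} * (\mathbf{u}_{\mathbf{c},\gamma}^1,\ldots,\mathbf{u}_{\mathbf{c},\gamma}^{i-1})$ is well-defined. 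Then $\mathbf{u}_{\mathbf{c},\gamma}^i$ is by construction the $\gamma$-max support flux vector of $\mathbf{c}_{i-1}$, so by Lemma 1 it is applicable at $\mathbf{c}_{i-1}$. This is exactly the condition required for $\mathbf{U}_{\mathbf{c},\epsilon}$ to be applicable at $\mathbf{c}$.

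For the norm bound, I would use the triangle inequality for $\|\cdot\|$ together with property 3 of the definition of an $\epsilon$-max support flux vector. Each $\mathbf{u}_{\mathbf{c},\gamma}^i$ is a $\gamma$-max support flux vector, hence $\|\mathbf{u}_{\mathbf{c},\gamma}^i\| \leq \gamma$. Therefore
\begin{equation*}
\Big\|\sum_{i=1}^m \mathbf{u}_{\mathbf{c},\gamma}^i\Big\| \;\leq\; \sum_{i=1}^m \|\mathbf{u}_{\mathbf{c},\gamma}^i\| \;\leq\; m\gamma \;=\; \epsilon.
\end{equation*}

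Neither step presents a real obstacle: the observation is essentially a bookkeeping check that the constants $\gamma$ and $m$ in the definition of $\mathbf{U}_{\mathbf{c},\epsilon}$ were chosen correctly. The only subtlety is ensuring that Lemma 1 applies at every intermediate state, which is handled cleanly by the induction above since the definition of $\mathbf{u}_{\mathbf{c},\gamma}^k$ already refers to the $\gamma$-max support flux vector of the previously reached state.
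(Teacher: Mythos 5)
Your proposal is correct and takes essentially the same route as the paper, which simply notes that the observation ``follows immediately from Lemma 1 and the choice of $\gamma$''; your induction on the prefix applicability and the triangle-inequality computation $m\gamma = \epsilon$ are exactly the details the authors leave implicit.
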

\begin{proof}
This follows immediately from Lemma 1 and the choice of $\gamma$.
\end{proof}

The choice of restricting the length of the flux vector $\mathbf{U}_{\mathbf{c}, \epsilon}$ to $\vert R \vert + 1$ is not arbitrary. We will show that this is all that is required to get to the largest possible support of a state. 

\begin{definition}
Let $\mathbf{c}$ be a state and $\epsilon > 0$. We say that a state $\mathbf{m}$ is an \textbf{$\epsilon$-max support state of $\mathbf{c}$} if, for every state $\mathbf{d}$ that is reachable from $\mathbf{c}$, $supp(\mathbf{d}) \subseteq supp(\mathbf{m})$.
\end{definition}

Similar to our previous definitions, we now define \textbf{the} $\epsilon$-max support state of $\mathbf{c}$ to be
\begin{center}
$\mathbf{m}_{\mathbf{c}, \epsilon} := (\mathbf{c} * \mathbf{U}_{\mathbf{c}, \epsilon})$.
\end{center}

\begin{lemma}
If $\mathbf{c}$ is a state and $\epsilon > 0$, then $\mathbf{m}_{\mathbf{c}, \epsilon}$ is an $\epsilon$-max support state of $\mathbf{c}$.
\end{lemma}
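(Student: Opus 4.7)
My plan is to denote $\mathbf{m}_k = \mathbf{c} * (\mathbf{u}_{\mathbf{c}, \gamma}^1, \ldots, \mathbf{u}_{\mathbf{c}, \gamma}^k)$ for $0 \le k \le m = \vert R \vert + 1$, so that $\mathbf{m}_0 = \mathbf{c}$ and $\mathbf{m}_m = \mathbf{m}_{\mathbf{c}, \epsilon}$, and to argue in three stages. Writing $S_k = supp(\mathbf{m}_k)$ and $A_k = App_{\mathbf{m}_k}$, I will establish that (i) both chains $S_0 \subseteq S_1 \subseteq \cdots$ and $A_0 \subseteq A_1 \subseteq \cdots$ are monotone and governed by the recursion $S_{k+1} = S_k \cup \{\text{products of reactions in } A_k\}$; (ii) after $\vert R \vert + 1$ applications the sequence has stabilized, making $supp(\mathbf{m}_{\mathbf{c}, \epsilon})$ closed under its applicable reactions (every product of a reaction in $App_{\mathbf{m}_{\mathbf{c}, \epsilon}}$ already lies in $supp(\mathbf{m}_{\mathbf{c}, \epsilon})$); and (iii) induction on the length of a reaching sequence then forces the support of every reachable state to lie inside $supp(\mathbf{m}_{\mathbf{c}, \epsilon})$.

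Stage (i) is a direct consequence of the choice of $\delta_{\mathbf{a}, \gamma}$: by construction it is small enough that $\mathbf{u}_{\mathbf{a}, \gamma}$ perturbs each species in $supp(\mathbf{a})$ by at most $\epsilon_{\mathbf{a}}/2$, so no species leaves the support. A species $s \notin supp(\mathbf{a})$ cannot appear as a reactant of any reaction in $App_{\mathbf{a}}$, so its net change under $\mathbf{u}_{\mathbf{a}, \gamma}$ is a non-negative multiple of $\delta_{\mathbf{a}, \gamma}$, strictly positive precisely when $s$ appears as a product of some reaction in $App_{\mathbf{a}}$. This yields the recursion above, and since $A_k$ is determined by $S_k$, both chains are monotone non-decreasing.

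For stage (ii), the chain $A_0 \subseteq A_1 \subseteq \cdots \subseteq A_m$ has $\vert R \vert + 2$ terms, whereas the longest strictly increasing chain in the powerset of $R$ has only $\vert R \vert + 1$ elements; hence by pigeonhole some consecutive pair must coincide, say $A_k = A_{k+1}$ with $k \le m - 1$. A short induction on $j \ge k$ then shows that $A_j = A_k$ and $S_{j+1} = S_{k+1}$ for all such $j$, because the new products contributed at step $j+1$ are exactly the products of reactions in $A_k$, and these already lie in $S_{k+1}$. In particular $S_m = S_{k+1}$ is closed under the reactions in $A_m = A_k$, which is precisely the closure property I need.

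Stage (iii) is induction on the length $j$ of a flux vector sequence $(\mathbf{v}_1, \ldots, \mathbf{v}_j)$ witnessing $\mathbf{c} \rightarrow^* \mathbf{d}$. The base case $j = 0$ is immediate. For the inductive step, let $\mathbf{d}' = \mathbf{c} * (\mathbf{v}_1, \ldots, \mathbf{v}_{j-1})$ and assume $supp(\mathbf{d}') \subseteq supp(\mathbf{m}_{\mathbf{c}, \epsilon})$; then every reaction in $supp(\mathbf{v}_j)$ is applicable at $\mathbf{d}'$ and hence also at $\mathbf{m}_{\mathbf{c}, \epsilon}$, so $supp(\mathbf{d}' * \mathbf{v}_j)$ is contained in $supp(\mathbf{d}')$ together with the products of these reactions, and both sets lie in $supp(\mathbf{m}_{\mathbf{c}, \epsilon})$ by the inductive hypothesis and by the closure property of stage (ii), respectively. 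I expect stage (ii) to be the main obstacle: keeping the pigeonhole bookkeeping clean and verifying carefully that stabilization of the reaction chain propagates to stabilization of the species chain within the allotted $\vert R \vert + 1$ applications.
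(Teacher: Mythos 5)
Your proof is correct, but it takes a genuinely different route from the paper's. The paper's argument is essentially two lines: it invokes Theorem 0 (the bound from Chen, Doty and Soloveichik that $\mathbf{c} \rightarrow^* \mathbf{d}$ implies $\mathbf{c} \rightarrow^{m+1} \mathbf{d}$) to replace an arbitrary reaching sequence by one of length $m = \vert R \vert + 1$, and then inducts over those $m$ steps, using Lemma 1 at each step so that the $i$-th prefix of $\mathbf{U}_{\mathbf{c},\epsilon}$ dominates the support of the $i$-th prefix of the competing sequence. You dispense with Theorem 0 entirely: your pigeonhole on the monotone chain $A_0 \subseteq \cdots \subseteq A_m$ shows the applicable-reaction set stabilizes within $\vert R \vert + 1$ applications, so $supp(\mathbf{m}_{\mathbf{c},\epsilon})$ is closed under its own applicable reactions, and a separate induction on reaching sequences of \emph{arbitrary} length then confines every reachable support inside it. What your approach buys is self-containedness --- no dependence on the external $m+1$-step bound --- and an explicit explanation of why $\vert R \vert + 1$ iterations suffice, which the paper's definition of $\mathbf{U}_{\mathbf{c},\epsilon}$ leaves implicit in its reliance on Theorem 0. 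What the paper's approach buys is brevity, at the cost of a slightly delicate induction: Lemma 1 as stated compares two flux vectors applied at the \emph{same} state, whereas the step-by-step domination really needs the mildly stronger observation that a state of larger support yields a larger support after one more max-support step; your closure-property formulation sidesteps that issue cleanly. All three of your stages check out: monotonicity of supports follows from the $\epsilon_{\mathbf{a}}/2$ bound in the construction of $\delta_{\mathbf{a},\gamma}$ and the fact that a species outside the support cannot be a reactant of an applicable reaction, the chain of $\vert R \vert + 2$ subsets of $R$ must indeed have a repeated consecutive pair, and the final induction correctly uses only the containment $supp(\mathbf{d}' * \mathbf{v}_j) \subseteq supp(\mathbf{d}') \cup \{\text{products of reactions in } supp(\mathbf{v}_j)\}$.
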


Since $\epsilon$ was arbitrary in Lemma 2, we see that for every $\epsilon, \epsilon^\prime > 0$, $supp(\mathbf{m}_{\mathbf{c}, \epsilon}) = supp(\mathbf{m}_{\mathbf{c}, \epsilon^\prime})$. Recall that a reaction $\rho$ is eventually applicable from a state $\mathbf{c}$ if $\rho$ is applicable at some state $\mathbf{d}$ that is reachable from $\mathbf{c}$. By Lemma 2, a reaction $\rho$ is eventually applicable from a state $\mathbf{c}$ if and only if $\rho$ is applicable at $\mathbf{m}_{\mathbf{c}, \epsilon}$ for any $\epsilon > 0$. This allows us to compute all the permanently inapplicable reactions from $\mathbf{c}$, which will be vital in the algorithm computing CCRN-REACH.

\begin{observation}
The set of all permanently inapplicable reactions from $\mathbf{c}$ is computable in polynomial time.
\end{observation}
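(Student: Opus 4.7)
The plan is to invoke the characterization stated explicitly in the paragraph preceding the observation: a reaction $\rho$ is eventually applicable from $\mathbf{c}$ if and only if $\rho$ is applicable at $\mathbf{m}_{\mathbf{c}, \epsilon}$ for any fixed $\epsilon > 0$. This reduces the task of enumerating the permanently inapplicable reactions to the task of computing a single $\epsilon$-max support state of $\mathbf{c}$ and then testing reactant-support containment for each reaction.

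Concretely, first fix any convenient $\epsilon > 0$ (e.g., $\epsilon = 1$) and compute the $\epsilon$-max support flux vector sequence $\mathbf{U}_{\mathbf{c}, \epsilon}$. By Observation 1 together with the remark following the definition of $\mathbf{U}_{\mathbf{c}, \epsilon}$, this sequence is computable in polynomial time in $(C, \mathbf{c}, \epsilon)$: it consists of $m = |R|+1$ successive max support flux vectors, each constructed in polynomial time by the recipe in Definition of $\mathbf{u}_{\mathbf{c}, \epsilon}$ (determine the applicable reactions, then compute the scalar $\delta_{\mathbf{c}, \epsilon}$). Applying this sequence yields $\mathbf{m}_{\mathbf{c}, \epsilon} = \mathbf{c} * \mathbf{U}_{\mathbf{c}, \epsilon}$ via $m$ rational matrix-vector updates of the form $\mathbf{c}' \mapsto \mathbf{c}' + \mathbf{M}\mathbf{u}$, which is still polynomial time.

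Next, iterate over the reactions $\rho = (\mathbf{r}, \mathbf{p}) \in R$ and test whether $\mathrm{supp}(\mathbf{r}) \subseteq \mathrm{supp}(\mathbf{m}_{\mathbf{c}, \epsilon})$; this is exactly the applicability test for $\rho$ at $\mathbf{m}_{\mathbf{c}, \epsilon}$ and requires only $O(|\Lambda|)$ comparisons per reaction. Output the set of $\rho$ that fail this test. By Lemma 2, $\mathrm{supp}(\mathbf{d}) \subseteq \mathrm{supp}(\mathbf{m}_{\mathbf{c}, \epsilon})$ for every $\mathbf{d}$ reachable from $\mathbf{c}$, and conversely $\mathbf{m}_{\mathbf{c}, \epsilon}$ is itself reachable from $\mathbf{c}$, so the output is precisely the set of reactions that are not applicable at any reachable state, i.e., the permanently inapplicable reactions.

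The substantive content is already carried by Lemma 2 and Observation 1, so there is no real obstacle here; the only thing worth flagging is independence of the choice of $\epsilon$, which follows from the remark immediately after Lemma 2 that $\mathrm{supp}(\mathbf{m}_{\mathbf{c}, \epsilon})$ does not depend on $\epsilon > 0$. Hence picking $\epsilon = 1$ is harmless, and the overall running time is polynomial in the size of the instance $(C, \mathbf{c})$.
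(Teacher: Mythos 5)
Your proposal is correct and matches the paper's own proof, which likewise computes $\mathbf{m}_{\mathbf{c},1}$ via Observation 1 and eliminates the reactions not applicable there; you have simply spelled out the polynomial-time accounting and the appeal to Lemma 2 in more detail.
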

\begin{proof}
By Observation 1 we compute the $1$-max support state of $\mathbf{c}$, $\mathbf{m}_{\mathbf{c}, 1}$, and eliminate all reactions not applicable at $\mathbf{m}_{\mathbf{c}, 1}$.
\end{proof}

We are now ready to prove our first main theorem.
\begin{maintheorem}
CCRN-REACH is computable in polynomial time.
\end{maintheorem}
\begin{proof}
Consider the following algorithm (Algorithm 1 below) deciding CCRN-REACH.
\begin{algorithm}
\caption{CCRN-REACH on input $C = (\Lambda, R)$, $\mathbf{c}$, $\mathbf{d}$}
\label{CHalgorithm}
\begin{algorithmic}[1]
\State Eliminate from $R$ all permanently inapplicable reactions from $\mathbf{c}$
\For{each reaction $\rho \in R$}
\State Compute a vector $F_{\rho} \in \mathbb{Q}^R_{\geq 0}$ such that $\mathbf{c} + \mathbf{M} F_\rho = \mathbf{d}$ and $F_{\rho}(\rho) > 0$, if one exists
\State if no such vector exists, eliminate $\rho$ from R, GOTO 1.
\EndFor
\State if R = $\emptyset$, output ``not reachable"
\State otherwise define vector $S \in \mathbb{Q}^R_{\geq 0}$ as follows
\State for each $\rho \in R$, set $S(\rho) = \frac{1}{\vert R \vert}\sum\limits_{i=1}^{\vert R \vert} F_i(\rho)$
\State Compute $\epsilon = \frac{min\{S(\rho)\}_{\rho \in R}}{2}$
\State Compute the max support flux vector sequence $\mathbf{U}_{\mathbf{c}, \epsilon}$
\State Compute $\mathbf{v} = S - \mathbf{U}_{\mathbf{c}, \epsilon}$
\State Output $(\mathbf{U}_{\mathbf{c}, \epsilon}, \mathbf{v})$ (padded with $0$s for eliminated reactions)
\end{algorithmic}
\end{algorithm}
From our previous observations, it is clear that the algorithm runs in polynomial time in terms of the input. We now prove that $\mathbf{d}$ is reachable from $\mathbf{c}$ if and only if the above algorithm outputs a flux vector sequence $\mathbf{U}$ applicable at $\mathbf{c}$ such that $\mathbf{c} * \mathbf{U} = \mathbf{d}$. 

Assume that, on input $C = (\Lambda, R)$, $\mathbf{c}$ and $\mathbf{d}$, the algorithm outputs a sequence of vectors $\mathbf{U}$. Let $R$ be the set of reactions left after exiting the loop (necessarily non-empty), and $m = \vert R \vert$. By the choice of $\epsilon$ and Observation 2, for each $\rho \in R$,
\begin{equation*}
\sum\limits_{i = 1}^{m+1} \mathbf{u}_{\mathbf{c}, \gamma}^i(\rho) < S(\rho),
\end{equation*}
where $\mathbf{U}_{\mathbf{c}, \epsilon} = (\mathbf{u}_{\mathbf{c}, \gamma}^1,\ldots, \mathbf{u}_{\mathbf{c}, \gamma}^{m+1})$ (recall that $\gamma = \frac{\epsilon}{m + 1}$). Therefore, the vector $\mathbf{v} = S - \mathbf{U}_{\mathbf{c}, \epsilon}$ is a flux vector (in fact $\mathbf{v}$ is strictly positive). Hence the output $\mathbf{U} = (\mathbf{U}_{\mathbf{c}, \epsilon}, \mathbf{v})$ is a flux vector sequence. By Observation 2, $\mathbf{U}_{\mathbf{c}, \epsilon}$ is applicable at $\mathbf{c}$. Upon exiting the loop we are guaranteed that any reactions remaining in $R$ must be eventually applicable from $\mathbf{c}$ using only the other remaining reactions. Let $\rho \in supp(\mathbf{v})$. Then $\rho \in R$, and so $\rho$ must be eventually applicable from $\mathbf{c}$ using only reactions remaining in $R$. By Lemma 2, $(\mathbf{c} * \mathbf{U}_{\mathbf{c}, \epsilon}) = \mathbf{m}_{\mathbf{c}, \epsilon}$ is a max support state, therefore $\rho$ is applicable at $(\mathbf{c} * \mathbf{U}_{\mathbf{c}, \epsilon})$. Since $\rho$ was arbitrary, $\mathbf{v}$ is applicable at $(\mathbf{c} *\mathbf{U}_{\mathbf{c}, \epsilon})$, and so $(\mathbf{U}_{\mathbf{c}, \epsilon}, \mathbf{v})$ is a flux vector sequence that is applicable at $\mathbf{c}$. Finally, we have 
\begin{align*}
\mathbf{c} * (\mathbf{U}_{\mathbf{c}, \epsilon}, \mathbf{v}) & = \mathbf{c} + \mathbf{M}(\mathbf{U}_{\mathbf{c}, \epsilon} + \mathbf{v})\\
& = \mathbf{c} + \mathbf{M}S \\
& = \mathbf{c} + \mathbf{M}\frac{1}{\vert R \vert}\sum\limits_{i=1}^{\vert R \vert} F_i(\rho) \\
& = \mathbf{c} + \frac{1}{\vert R \vert}\sum\limits_{i=1}^{\vert R \vert} \mathbf{M}F_i(\rho) \\
& = \mathbf{c} + \frac{1}{\vert R \vert}\sum\limits_{i=1}^{\vert R \vert} \mathbf{d} - \mathbf{c} \\
& = \mathbf{d},
\end{align*}
where $\mathbf{M}$ is the stoichiometry matrix of $C = (\Lambda, R)$. Therefore if the algorithm outputs a vector sequence, then $\mathbf{d}$ is reachable from $\mathbf{c}$.

For the other direction, assume that $\mathbf{d}$ is reachable from $\mathbf{c}$. Then, by definition, there is a nonempty subset $R^\prime \subseteq R$ such that, for all $\rho \in R^\prime$,
\begin{enumerate}
\item $\rho$ is eventually applicable from $\mathbf{c}$ using only reactions from $R^\prime$, and
\item there exists a vector $F_\rho$ such that $\mathbf{M}F_\rho = \mathbf{d} - \mathbf{c}$ and $F_\rho(\rho) > 0$.
\end{enumerate}
Therefore the algorithm will exit the loop with $R$ nonempty, and output a flux vector sequence $(\mathbf{U}_{\mathbf{c}, \epsilon}, \mathbf{v})$. As we just shown, $(\mathbf{U}_{\mathbf{c}, \epsilon}, \mathbf{v})$ applicable at $\mathbf{c}$ such that $\mathbf{c} * (\mathbf{U}_{\mathbf{c}, \epsilon}, \mathbf{v}) = \mathbf{d}$. 

\end{proof}

\section{The Subset Reachability Problem}
Define the decision problem Sub-CCRN-REACH as follows, 
\begin{subsetReachdefinition}
Given a continuous CRN $C = (\Lambda, R)$, states $\mathbf{c}$, $\mathbf{d}$ and integer $k$, accept if and only if there exists a path from $\mathbf{c}$ to $\mathbf{d}$ using only $k$ reactions from $R$.
\end{subsetReachdefinition}

In contrast to the computational ease of CCRN-REACH, we give evidence that the related problem Sub-CCRN-REACH is quite difficult.
\begin{maintheorem}
Sub-CCRN-REACH is NP-complete.
\end{maintheorem}
\begin{proof}
From the proof that CCRN-REACH is computable in polynomial time, it is easy to see that Sub-CCRN-REACH is in NP. Simply guess a subset of $k$ reactions and decide CCRN-REACH on the subset. We will reduce 3SAT to Sub-CCRN-REACH to show hardness.

Let $\phi$ be a boolean formula on n variables $x_1,\ldots, x_n$ with m clauses $C_1,\ldots,C_m$. Construct an equivalent CCRN $C_{\phi} = (\Lambda_{\phi}, R_{\phi})$ and states $\mathbf{c}_\phi$, $\mathbf{d}_\phi$ as follows. 

For each $x_i$, define three species $S_i$, $s_i$ and $\bar{s_i}$, and the following four reactions, where $\emptyset$ is a null species (the reactants are being consumed without generating any products).
\begin{multicols}{4}
\begin{enumerate}
\item $S_i \rightarrow s_i$
\item $S_i \rightarrow \bar{s_i}$
\item $s_i \rightarrow \emptyset$
\item $\bar{s_i} \rightarrow \emptyset,$
\end{enumerate}
\end{multicols}

For each clause $C_j$ define one species $T_j$, and the following (catalytic) reactions
\begin{multicols}{2}
\begin{enumerate}
\item $s_i \rightarrow s_i + T_j$, for every $x_i$ in $C_j$ 
\item $\bar{s_i} \rightarrow \bar{s_i} + T_j$, for every $\bar{x_i}$ in $C_j$
\end{enumerate}
\end{multicols}

Define the start state $\mathbf{c}_\phi$ to have a concentration of 1 for each species $S_i$ and a concentration of $0$ for all other species. Define the end state $\mathbf{d}_\phi$ to have a concentration of $1$ for each species $T_j$, and a concentration of $0$ for all other species. 

We now show that $\phi \in$ 3SAT if and only if $\mathbf{c}_\phi \rightarrow^* \mathbf{d}_\phi$ using exactly $2n + m$ reactions. Assume $\phi \in$ 3SAT, let $\mathbf{x}$ be any satisfying assignment. Define the flux vector $\mathbf{u}_1$ by,
\begin{center}
$\mathbf{u}_1(\rho) = \begin{cases} 
	1 &\mbox{if } \rho = S_i \rightarrow s_i$ and $\mathbf{x}(x_i) = 1 \\ 
	1 &\mbox{if } \rho = S_i \rightarrow \bar{s_i}$ and $\mathbf{x}(x_i) = 0 \\ 
	0 & \mbox{otherwise.} \end{cases}
$
\end{center}
The flux vector $\mathbf{u}_1$ transfers all of the concentration of $S_i$ into either $s_i$ or $\bar{s_i}$, depending on the satisfying assignment $\mathbf{x}$. The number of reactions given positive flux in $\mathbf{u}_1$ is $n$. For each clause $C_j$, choose one variable $x_{i_j}$ or its negation that evaluates to true under $\mathbf{x}$. Define the flux vector $\mathbf{u}_2$ by,
\begin{center}
$\mathbf{u}_2(\rho) = \begin{cases} 
	1 &\mbox{if } \rho = s_{i_j} \rightarrow s_{i_j} + T_j$ and $x_{i_j}$ is the chosen variable from $C_j \\ 
	1 &\mbox{if } \rho = \bar{s_{i_j}} \rightarrow \bar{s_{i_j}} + T_j$ and $\bar{x_{i_j}}$ is the chosen variable from $C_j \\ 
	0 & \mbox{otherwise.} \end{cases} 
$
\end{center}
Therefore $\mathbf{u}_2$ only gives positive flux to $m$ reactions, one for each clause in $\phi$. Finally, define the flux vector $\mathbf{u}_3$ by,
\begin{center}
$\mathbf{u}_3(\rho) = \begin{cases} 
	1 &\mbox{if } \rho = s_i \rightarrow \emptyset$ and $\mathbf{x}(x_i) = 1 \\ 
	1 &\mbox{if } \rho = \bar{s_i} \rightarrow \emptyset$ and $\mathbf{x}(x_i) = 0 \\ 
	0 & \mbox{otherwise.} \end{cases} 
$
\end{center}
The flux vector $\mathbf{u}_3$ eliminates the concentrations of each species $s_i$ or $\bar{s_i}$ (only one of which has concentration $1$). Clearly $\mathbf{u}_3$ gives positive flux to only $n$ reactions. Hence $\mathbf{U} = (\mathbf{u}_1, \mathbf{u}_2, \mathbf{u}_3)$ gives positive flux to only $2n + m$ distinct reactions, and $\mathbf{c}_\phi * \mathbf{U} = \mathbf{d}_\phi$.

Assume $\mathbf{c}_\phi * \mathbf{U} = \mathbf{d}_\phi$. Since $S_i$ has concentration 0 at $\mathbf{d}$ at least one of the reactions $S_i \rightarrow s_i$, $S_i \rightarrow \bar{s_i}$ must be used, that is, at least $n$ reactions. Similarly, since $s_i$ and $\bar{s_i}$ have concentration 0 at $\mathbf{d}_\phi$, at lest $n$ must be used in any flux vector sequence. Since $T_j$ has concentration 1 at $\mathbf{d}_\phi$ at least one reaction of the form $s_i \rightarrow s_i + T_j$ or $\bar{s_i} \rightarrow \bar{s_i} + T_j$ must be used for each $T_j$, so at least $m$ reactions. Hence $\mathbf{U}$ must give positive flux to at least $2n + m$ reactions. In order to reach $\mathbf{d}$ using the minimal number of reactions, $2n + m$, $\mathbf{U}$ must only give flux to one of $S_i \rightarrow s_i$ or $S_i \rightarrow \bar{s_i}$. Let $\mathbf{x}$ be the assignment of the variables $(x_1,\ldots, x_n)$ given by
\begin{center}
$\mathbf{x}(x_i) = \begin{cases} 
	1 &\mbox{if } \mathbf{U}(S_i \rightarrow s_i) = 1 \\ 
	0 & \mbox{otherwise.} \end{cases} 
$
\end{center}
Since $\mathbf{c}_\phi * \mathbf{U} = \mathbf{d}_\phi$, $\mathbf{U}$ gives positive flux to $s_i \rightarrow s_i + T_j$ or $\bar{s_i} \rightarrow \bar{s_i} + T_j$ for each species $T_j$ and some $i$. Therefore each clause $C_j$ must be satisfiable under assignment $\mathbf{x}$. Hence, if $\mathbf{c}_\phi \rightarrow^* \mathbf{d}_\phi$ using exactly $2n + m$ reactions, then $\phi \in$ 3SAT.
\end{proof}

\section*{Acknowledgments}
We thank Tim McNicholl, Xiang Huang, Titus Klinge, and Jim Lathrop for useful discussions.

\section{Appendix}

\begin{prooflemma1}
First we show $\mathbf{u}_{\mathbf{c}, \epsilon}$ is a flux vector that is applicable at $\mathbf{c}$. It is clear that $\mathbf{u}_{\mathbf{c}, \epsilon}$ is a flux vector. It suffices to show that every reaction $\rho \in supp(\mathbf{u}_{\mathbf{c}, \epsilon})$ is applicable at $\mathbf{c}$ and that $(\mathbf{c} * \mathbf{u}_{\mathbf{c}, \epsilon}) \in \mathbb{R}_{\geq 0}^{\Lambda}$. From the definition of $\mathbf{u}_{\mathbf{c}, \epsilon}$, $\mathbf{u}_{\mathbf{c}, \epsilon}(\rho) > 0$ if and only if $\rho \in App_{\mathbf{c}}$. Therefore if $\rho \in supp(\mathbf{u}_{\mathbf{c}, \epsilon})$, then $\rho$ is applicable at $\mathbf{c}$. To complete the proof of item (1) we show $\mathbf{c} * \mathbf{u}_{\mathbf{c}, \epsilon}$ remains non-negative. Let $s \in \Lambda$ be any species, and assume that the concentration of $s$ at $\mathbf{c}$ is greater than $0$, i.e., $s \in supp(\mathbf{c})$. By the definition of $\mathbf{u}_{\mathbf{c}, \epsilon}$,
\begin{center}
$\vert\sum\limits_{\rho \in App_{\mathbf{c}}}{\mathbf{u}_{\mathbf{c}, \epsilon}(\rho) \Delta\rho(s)} \vert \leq \delta_{\mathbf{c}, \epsilon}  \vert R \vert  \Gamma_{\mathbf{c}}  \leq \frac{\epsilon_{\mathbf{c}}}{2}$
\end{center} 
and therefore, 
\begin{align*}
(\mathbf{c} * \mathbf{u}_{\mathbf{c}, \epsilon})(s) &= \mathbf{c}(s) + \sum\limits_{\rho \in R} \mathbf{u}_{\mathbf{c}, \epsilon}(\rho) \Delta\rho(s) \\
&\geq \mathbf{c}(s) - \vert\sum\limits_{\rho \in App_{\mathbf{c}}}{\mathbf{u}_{\mathbf{c}, \epsilon}(\rho) \Delta\rho(s)} \vert \\
&\geq \epsilon_{\mathbf{c}} - \frac{\epsilon_{\mathbf{c}}}{2} \\ 
&> 0.
\end{align*}
Hence for every species $s \in supp(\mathbf{c})$, $(\mathbf{c} * \mathbf{u}_{\mathbf{c}, \epsilon})(s) > 0$. Now assume $s \notin supp(\mathbf{c})$; the concentration of $s$ at $\mathbf{c}$ is $0$. As we have seen, the only reactions $\rho$ such that $\mathbf{u}_{\mathbf{c}, \epsilon}(\rho) > 0$ are those reactions which are applicable at $\mathbf{c}$. By our assumption $\mathbf{c}(s) = 0$, any applicable reaction $\rho$ at $\mathbf{c}$ must have $\Delta\rho(s) \geq 0$. It is therefore clear that 
\begin{center}
$(\mathbf{c} * \mathbf{u}_{\mathbf{c}, \epsilon})(s) \geq 0$.
\end{center}

We now prove that, for every flux vector $\mathbf{v}$ applicable at $\mathbf{c}$, $supp(\mathbf{c} * \mathbf{v}) \subseteq supp(\mathbf{c} * \mathbf{u})$. Let $s \in supp(\mathbf{c} * \mathbf{v})$. We first assume that $s \in supp(\mathbf{c})$. We showed previously that if $s \in supp(\mathbf{c})$, then $(\mathbf{c} * \mathbf{u}_{\mathbf{c}, \epsilon})(s) > 0$. Hence $s \in supp(\mathbf{c} * \mathbf{u}_{\mathbf{c}, \epsilon})$. Now assume that $s \notin supp(\mathbf{c})$. As the concentration of $s$ at $\mathbf{c}$ is $0$, any applicable reaction $\rho$ at $\mathbf{c}$ must have $\Delta\rho(s) > 0$. Since $s \in supp(\mathbf{c} * \mathbf{v})$, there must be at least one reaction $\rho_{s}$ applicable at $\mathbf{c}$ such that $\Delta\rho_{s}(s) > 0$. Since $\rho_s$ is applicable at $\mathbf{c}$, $\mathbf{u}_{\mathbf{c}, \epsilon}(\rho_s) = \delta_{\mathbf{c}, \epsilon}$. Thus
\begin{align*}
(\mathbf{c} * \mathbf{u}_{\mathbf{c}, \epsilon})(s) &= \mathbf{c}(s) + \sum\limits_{\rho \in R} \mathbf{u}_{\mathbf{c}, \epsilon}(\rho) \Delta\rho(s) \\
&= \vert\sum\limits_{\rho \in App_{\mathbf{c}}}{\mathbf{u}_{\mathbf{c}, \epsilon}(\rho) \Delta\rho(s)} \vert \\
&\geq \mathbf{u}_{\mathbf{c}, \epsilon}(\rho_s) \Delta\rho_s(s) \\
&\geq \delta_{\mathbf{c}, \epsilon} \\ 
&> 0.
\end{align*}

Finally, it is immediate that $\| \mathbf{u}_{\mathbf{c}, \epsilon} \| \leq \epsilon$.
\end{prooflemma1}

\begin{prooflemma1}
Let $\mathbf{d}$ be a state reachable from $\mathbf{c}$. By Theorem 0, there exists a flux vector sequence of length $m = \vert R \vert + 1$ taking $\mathbf{c}$ to $\mathbf{d}$, i.e. $\mathbf{c} \rightarrow^m \mathbf{d}$. By induction and use of Lemma 1, we see that for every state $\mathbf{d}$ such that $\mathbf{c} \rightarrow^m \mathbf{d}$, $supp(\mathbf{d}) \subseteq supp(\mathbf{m}_{\mathbf{c}, \epsilon})$. 
\end{prooflemma1}
\end{document}